\documentclass[12pt,a4paper]{article}

\usepackage{cancel}
\usepackage[normalem]{ulem}
\usepackage[T1]{fontenc}
\usepackage[utf8]{inputenc}
\usepackage{authblk}
\usepackage{graphicx,booktabs,balance,enumitem}
\usepackage{tikz}
\usepackage{algorithmicx}
\usepackage{algpseudocode}
\usepackage[english]{babel}
\usepackage{color}
\usepackage{graphicx}
\usepackage{hyperref}
\usepackage{amsmath,amssymb,amsthm}
\usepackage{caption}
\usepackage{listings}
\usepackage{float}

\lstset{ %
  backgroundcolor=\color{white},   
  basicstyle=\footnotesize,        
  breakatwhitespace=false,         
  breaklines=true,                 
  captionpos=b,                    
  extendedchars=true,              
  frame=single,                    
  keepspaces=true,                 
  language=Matlab,                 
  morekeywords={*,...},            
  numbers=left,                    
  numbersep=5pt,                   
  numberstyle=\tiny, 		   
  rulecolor=\color{black},         
  showspaces=false,                
  showstringspaces=false,          
  showtabs=false,                  
  stepnumber=2,                    
  stringstyle=\color{mymauve},     
  tabsize=2,                       
  title=\lstname                   
}

\newtheorem{defi}{Definition}

\newtheorem{teo}[defi]{Theorem}
\newtheorem{lem}[defi]{Lemma}
\newtheorem{cor}[defi]{Corollary}

\newtheorem{prop}[defi]{Proposition}
\newtheorem{ex}{Example}

\newcommand{\bbF}{{\mathbb{F}}}
\newcommand{\bbN}{{\mathbb{N}}}

\date{}

\title{Refined analysis of RGHWs of code pairs coming from Garcia-Stichtenoth's second tower\thanks{Supported by the Danish Council for Independent Research, grant DFF-4002-00367 and the Spanish Ministry of Economy, grant MTM2012-36917-C03-03.}}
\author{Olav Geil, Stefano Martin, Umberto Mart{\'\i}nez-Pe{\~n}as, Diego Ruano\thanks{Department of Mathematical Sciences, Aalborg University, 
Fredrik Bajers Vej 7G, 9220-Aalborg {\O}st, Denmark. \texttt{\{olav,stefano,umberto,diego\}@math.aau.dk}}}

\begin{document}
\maketitle

\begin{abstract}
Asymptotically good sequences of ramp secret sharing schemes were given in \cite{geil2015asymptotically} by using one-point algebraic geometric 
codes defined from asymptotically good towers of function fields. Their security is given by the relative generalized Hamming weights of the corresponding codes. In this paper we demonstrate how to obtain refined information on the RGHWs when the codimension of the codes is 
small. For general codimension, we give an improved estimate for the highest RGHW.
\end{abstract}

\section{Introduction} \label{secintroduction}

Relative generalized Hamming weights (RGHWs) of two linear codes are fundamental
 for evaluating the security of ramp secret sharing schemes and wire-tap channels of type II \cite{geil2014relative,kurihara2012secret,luo,wei}. 
Until few years ago only  the RGHWs of MDS codes and a few other examples of codes were known \cite{liu2008relative}, 
but recently new results were discovered for one-point algebraic geometric codes \cite{geil2014relative}, $q$-ary Reed-Muller codes \cite{martin2014relative}
 and cyclic codes \cite{zhang2015relative}. In \cite{geil2015asymptotically} it was discussed how to obtain
 asymptotically good sequences of ramp secret sharing schemes by using one-point algebraic geometric codes defined from 
asymptotically good towers of function fields. The tools used in \cite{geil2015asymptotically} were the Goppa bound and the Feng-Rao bounds. 
In the present paper we focus on secret sharing schemes coming from the Garcia-Stichtenoth's second tower \cite{garcia1996asymptotic}. 
We give a method for obtaining new information on the RGHWs when the used codes have small codimension. For general codimension we give an improved estimate on
the highest RGHW. The new results are obtained by studying in detail the sequence of Weierstrass semigroups related to the sequence of rational
places \cite{pellikaan1998weierstrass}.

We recall the definition of RGHWs and briefly mention their use in connection with secret sharing schemes.

\begin{defi}
Let $C_2 \subsetneq C_1 \subseteq \mathbb{F}_q^ n$ be two linear codes. For $m = 1,\ldots,\ell = \dim C_1 - \dim C_2$ the $m$-th relative generalized Hamming weight  of $C_1$ with respect to $C_2$ is 
$$M_m(C_1,C_2) = \min\{\# \mathrm{Supp}D \mid D\subseteq C_1 \mbox{ is a linear space, }$$
$$\hspace{7 cm} \dim D = m, D \cap C_2 = \{\vec{0}\}\}.$$
Here $\mathrm{Supp}D = \# \{i \in \bbN \mid \mbox{ exists }(c_1,\ldots,c_n) \in D\mbox{ with }c_i \neq 0\}$. Note that for $m = 1,\ldots, \dim C_1$, the $m$-th generalized Hamming weight (GHW) of $C_1$ $d_m(C_1)$ is $M_m(C_1,\{\vec{0}\})$. 
\end{defi}

Given $C_2 \subsetneq C_1$ linear codes, by definition, we have that the $m$-th generalized Hamming weight 
is a lower bound for the $m$-th relative generalized Hamming weight of $C_1$ with respect to $C_2$, i.e. $M_m(C_1,C_2) \geq d_m(C_1)$.
In \cite{ChenCramer}, a general construction of a linear secret sharing scheme with $n$ participants is defined from two linear codes $C_2 \subsetneq C_1$ of length $n$. It was proved in \cite{geil2014relative, kurihara2012secret} that it has $r_m = n - M_{\ell-m+1}(C_1,C_2)+1$ reconstruction and $t_m = M_m(C_2^{\perp},C_1^{\perp}) - 1$ privacy for $m =
1,\ldots,\ell$. Here, $r_m$ and $t_m$ are the unique numbers such that
the following holds: It is not possible to recover $m$ $q$-bits of
information about the secret with only $t_m$ shares, but it is possible with some
$t_m+1$ shares. With any $r_m$ shares it is possible to 
recover $m$ $q$-bits of information about the secret, but it is not possible to recover $m$ $q$-bits of information with some
 $r_m-1$ shares.

We shall focus on one-point algebraic geometric codes $C_{\mathcal{L}}(D,G)$ where $D = P_1 + \ldots + P_n$, $G = \mu Q$, and $P_1, \ldots, P_n, Q$ are pairwise different rational places over a function field. By writing $\nu_Q$ for the valuation at $Q$, the Weiestrass semigroup corresponding to $Q$ is
$$H(Q) = - \nu_Q\left( \bigcup_{\mu = 0}^\infty \mathcal{L}(\mu Q) \right) = \{\mu \in \bbN_0 \mid \mathcal{L}(\mu Q) \neq \mathcal{L}((\mu - 1) Q)\}.$$

We denote by $g$ the genus of the function field and by $c$ the conductor of the Weierstrass semigroup.
We consider $C_1 = C_{\mathcal{L}}(D,\mu_1 Q)$ and $C_2 = C_{\mathcal{L}}(D,\mu_2 Q)$, with $-1 \leq \mu_2 < \mu_1$. Observe that for $\ell = \dim(C_1) - \dim(C_2)$ and $\mu = \mu_1-\mu_2$ we have that $\ell \leq \mu$, with equality if $2g -1 \leq \mu_2 < \mu_1 \leq n-1$ holds.

  From \cite[Theorem 19]{geil2014relative} we have the following bound:

\begin{teo}  \label{teomu}
For $m = 1,\ldots,\ell$ we have that: 
$$M_m(C_1,C_2) \geq n - \mu_1 + Z(H(Q),\mu,m),$$ where
$$Z(H(Q),\mu,m) = \min\{\#\{\alpha \in \cup_{s=1}^{m-1}(i_s + H(Q)) \mid \alpha \notin H(Q)\} \mid $$
$$\hspace{6 cm}-(\mu-1) \leq i_1 < i_2 < \ldots < i_{m-1} \leq -1\}.$$
\end{teo}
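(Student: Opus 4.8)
The plan is to fix an arbitrary $m$-dimensional subspace $D \subseteq C_1$ with $D \cap C_2 = \{\vec{0}\}$ and to bound $\#\mathrm{Supp}\,D$ from below by $n - \mu_1 + Z(H(Q),\mu,m)$; since $D$ is arbitrary, this yields the claimed bound on $M_m(C_1,C_2)$. First I would pass from codewords to functions. Writing $\mathrm{ev}(f) = (f(P_1),\ldots,f(P_n))$, the map $\mathrm{ev}$ is injective on $\mathcal{L}(\mu_1 Q)$ in the relevant range $\mu_1 < n$ (otherwise one lifts a basis of $D$), so $D = \{\mathrm{ev}(f) \mid f \in V\}$ for an $m$-dimensional space $V \subseteq \mathcal{L}(\mu_1 Q)$, and $D \cap C_2 = \{\vec{0}\}$ translates into $V \cap \mathcal{L}(\mu_2 Q) = \{0\}$. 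Ordering the nonzero elements of $V$ by their pole order at $Q$ and triangulating, I obtain a basis $f_1,\ldots,f_m$ with strictly increasing pole orders $\lambda_1 < \cdots < \lambda_m$, all lying in $H(Q)$; the condition $V \cap \mathcal{L}(\mu_2 Q) = \{0\}$ forces $\lambda_1 \geq \mu_2 + 1$, while $\lambda_m \leq \mu_1$.

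Next, let $R \subseteq \{P_1,\ldots,P_n\}$ be the set of common zeros of $f_1,\ldots,f_m$, so that $\#\mathrm{Supp}\,D = n - \#R$, and put $E = \sum_{P_i \in R} P_i$. The heart of the argument is to control $\#R = \deg E$ through the auxiliary set
\[ H_E = \{h \in \bbN_0 \mid \mathcal{L}(hQ - E) \neq \mathcal{L}((h-1)Q - E)\} \]
of pole orders at $Q$ of functions vanishing on $E$. Two facts drive the proof. First, $H_E \subseteq H(Q)$ and $H_E + H(Q) \subseteq H_E$; since each $f_s \in \mathcal{L}(\lambda_s Q - E)$ has exact pole order $\lambda_s$, this gives $\lambda_s \in H_E$ and hence $\bigcup_{s=1}^m (\lambda_s + H(Q)) \subseteq H_E$. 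Second, a Riemann--Roch computation (letting $N \to \infty$ in $\dim \mathcal{L}(NQ - E) = N - \deg E + 1 - g$) shows that $H_E$ has exactly $g + \deg E$ gaps in $\bbN_0$, whereas $H(Q)$ has $g$; as $H_E \subseteq H(Q)$ this gives the exact identity $\#(H(Q) \setminus H_E) = \deg E = \#R$.

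Combining the two facts, $\#R = \#(H(Q) \setminus H_E) \leq \#\big(H(Q) \setminus \bigcup_{s=1}^m (\lambda_s + H(Q))\big)$. I would then isolate the largest pole order via the set identity $H(Q) \setminus \bigcup_{s=1}^m(\lambda_s + H(Q)) = \big(H(Q) \setminus (\lambda_m + H(Q))\big) \setminus \bigcup_{s=1}^{m-1}(\lambda_s + H(Q))$; since $\#\big(H(Q) \setminus (\lambda_m + H(Q))\big) = \lambda_m$, this yields $\#R \leq \lambda_m - \#A$, where $A = \bigcup_{s=1}^{m-1}(\lambda_s + H(Q)) \setminus (\lambda_m + H(Q))$. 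The final step is the key bookkeeping: translating by $-\lambda_m$, the map $\beta \mapsto \beta - \lambda_m$ sends $A$ bijectively onto $\{\alpha \in \bigcup_{s=1}^{m-1}(i_s + H(Q)) \mid \alpha \notin H(Q)\}$ with $i_s := \lambda_s - \lambda_m$. These satisfy $-(\mu-1) \leq i_1 < \cdots < i_{m-1} \leq -1$ (using $\lambda_1 \geq \mu_2 + 1$, $\lambda_m \leq \mu_1$ and $\mu = \mu_1 - \mu_2$), so $\#A$ is one of the quantities minimized in $Z(H(Q),\mu,m)$, whence $\#A \geq Z(H(Q),\mu,m)$. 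Putting everything together, $\#\mathrm{Supp}\,D = n - \#R \geq n - \lambda_m + \#A \geq n - \mu_1 + Z(H(Q),\mu,m)$.

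I expect the main obstacle to be identifying the auxiliary set $H_E$ and establishing the exact gap-count identity $\#(H(Q) \setminus H_E) = \deg E$: this is what converts a geometric quantity (the number of common zeros of the $f_s$) into a purely semigroup-theoretic count, and it must be an equality rather than a mere inequality for the constants to line up. The remaining difficulty is purely combinatorial, namely verifying that the shift $i_s = \lambda_s - \lambda_m$ lands the indices in the admissible range $[-(\mu-1),-1]$ and matches the defining expression of $Z(H(Q),\mu,m)$ term by term.
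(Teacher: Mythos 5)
The paper does not prove Theorem~\ref{teomu} at all --- it is imported verbatim from \cite[Theorem 19]{geil2014relative} --- so there is no in-paper proof to compare against. Your argument is correct and complete, and it is essentially the standard route to this bound (and in the same spirit as the proof in the cited source): every step checks out. In particular, the passage from $D$ to a function space $V$ with a basis of distinct pole orders $\lambda_1<\cdots<\lambda_m$ in $H(Q)$, the constraints $\lambda_1\geq\mu_2+1$ and $\lambda_m\leq\mu_1$ forced by $D\cap C_2=\{\vec{0}\}$ and $D\subseteq C_1$, the gap-count identity $\#\bigl(H(Q)\setminus H_E\bigr)=\deg E$ via Riemann--Roch, the inclusion $\bigcup_{s=1}^m(\lambda_s+H(Q))\subseteq H_E$, the identity $\#\bigl(H(Q)\setminus(\lambda_m+H(Q))\bigr)=\lambda_m$, and the translation by $-\lambda_m$ landing the indices $i_s=\lambda_s-\lambda_m$ in the admissible range $[-(\mu-1),-1]$ are all sound, and the constants line up exactly as claimed. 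Your own diagnosis of where the real content lies (the exact equality $\#\bigl(H(Q)\setminus H_E\bigr)=\deg E$, which converts the geometric count of common zeros into pure semigroup combinatorics) is accurate. The only cosmetic caveat is the lifting step when $\mathrm{ev}$ fails to be injective: your parenthetical fix (lift a basis of $D$ and note that the span still meets $\mathcal{L}(\mu_2 Q)$ trivially) works, but in the setting of this paper $\mu_1\leq n-1$ always holds, so injectivity is automatic and the caveat is moot.
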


For $m>g$, one has that $d_m (C) = n-k+m$, that is the Singleton bound is reached \cite[Corollary 4.2]{TsVl}. For other values of $m$, using Theorem \ref{teomu}, the following result  was found \cite[Proposition 14]{geil2015asymptotically}.

\begin{prop}  \label{propAGnew}
Let $C_{\mathcal{L}}(D,\mu Q)$ be a one-point algebraic geometric code of length $n$ and dimension $k$.
If $-1 \leq \mu < n$ and $1 \leq m \leq \min\{k,g\}$, then:
$$d_m(C_{\mathcal{L}}(D,\mu Q)) \geq n - k + 2m - c + h_{c-m} \geq n - k + 2m  - c$$ 
where $h_{c-m} = \#(H(Q) \cap (0,c-m])$.
\end{prop}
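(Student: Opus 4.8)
The plan is to derive the bound from Theorem~\ref{teomu} by choosing the inner code to be trivial, and then to reduce the estimate of $Z(H(Q),\mu+1,m)$ to a counting statement about the gaps of $H(Q)$. First I would note that $\mathcal{L}(-Q)=\{0\}$, so $\{\vec{0}\}=C_{\mathcal{L}}(D,-Q)$ and, by the definition of the generalized Hamming weight, $d_m(C_{\mathcal{L}}(D,\mu Q))=M_m(C_{\mathcal{L}}(D,\mu Q),\{\vec{0}\})$. Applying Theorem~\ref{teomu} with $\mu_1=\mu$ and $\mu_2=-1$ (so the theorem's parameter equals $\mu_1-\mu_2=\mu+1$ and $\ell=k$) yields
$$d_m(C_{\mathcal{L}}(D,\mu Q))\ge n-\mu+Z(H(Q),\mu+1,m).$$
The second inequality of the statement is immediate from $h_{c-m}\ge 0$, so it remains to prove $Z(H(Q),\mu+1,m)\ge \mu-k+2m-c+h_{c-m}$.

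I would then rephrase everything through gap counts. Write $G=\bbN_0\setminus H(Q)$ and $\gamma(x)=\#(G\cap[0,x])$. Since $\mu<n$ the evaluation map is injective, so $k=\#(H(Q)\cap[0,\mu])=\mu+1-\gamma(\mu)$, and similarly $h_{c-m}=c-m-\gamma(c-m)$. Substituting, the target inequality becomes the purely combinatorial statement
$$Z(H(Q),\mu+1,m)\ \ge\ \gamma(\mu)+(m-1)-\gamma(c-m).$$
In fact I would prove the sharper estimate $Z(H(Q),\mu+1,m)\ge (m-1)+\#\big(G\cap[c-m+1,c-1]\big)$ and reconcile the two at the end via Riemann's inequality $k\ge\mu+1-g$, i.e. $\gamma(\mu)\le g$.

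The combinatorial core is to show that for every admissible tuple $-\mu\le i_1<\cdots<i_{m-1}\le -1$ the set $\bigcup_{s=1}^{m-1}(i_s+H(Q))$ contains at least $(m-1)+\#(G\cap[c-m+1,c-1])$ elements outside $H(Q)$. I would exhibit two disjoint families. The first is the set of shifts $\{i_1,\ldots,i_{m-1}\}$, which are $m-1$ distinct negative integers and hence lie outside $H(Q)$. The second is the set of gaps $\beta\in[c-m+1,c-1]$. The step I expect to be the main obstacle is verifying that these gaps all appear in the union: since $i_1<\cdots<i_{m-1}\le -1$ are distinct negative integers one has $i_1\le -(m-1)$, whence $\beta-i_1\ge(c-m+1)+(m-1)=c$; as every integer $\ge c$ lies in $H(Q)$, this gives $\beta-i_1\in H(Q)$ and thus $\beta\in i_1+H(Q)$. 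Each such $\beta$ is a positive gap, so it is outside $H(Q)$ and distinct from the negative shifts. Admissibility guarantees the construction makes sense: $m\le k\le\mu+1$ forces $\mu\ge m-1$ (so the index $-(m-1)$ is in range), while $m\le g\le c-1$ forces $c-m+1\ge 2$ (so the window consists of positive integers).

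Putting the two families together gives $Z(H(Q),\mu+1,m)\ge(m-1)+\#(G\cap[c-m+1,c-1])=(m-1)+g-\gamma(c-m)$. To conclude, I would insert this into $n-\mu+Z$ together with $k=\mu+1-\gamma(\mu)$ and $h_{c-m}=c-m-\gamma(c-m)$; the resulting lower bound exceeds the claimed $n-k+2m-c+h_{c-m}$ by exactly $g-\gamma(\mu)\ge0$, which finishes the proof. Apart from the pigeonhole inequality $i_1\le-(m-1)$ and the range and positivity checks above, everything else is routine bookkeeping with gap counts.
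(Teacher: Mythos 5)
Your proof is correct and follows essentially the same route as the paper's (which is imported from \cite[Proposition 14]{geil2015asymptotically}): apply Theorem~\ref{teomu} with $C_2=C_{\mathcal{L}}(D,-Q)=\{\vec 0\}$ and lower-bound $Z$ by counting the $m-1$ negative shifts together with the gaps in the window $[c-m+1,c-1]$ reached via $i_1\le-(m-1)$ and the conductor. Your intermediate bound $n-\mu+(m-1)+g-\gamma(c-m)$ is exactly the inequality $d_m\ge n-\mu+g-1+2m-c+h_{c-m}$ quoted after the proposition, and your final reduction via $k\ge\mu+1-g$ matches the stated form.
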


Moreover, in the proof of  \cite[Proposition 14]{geil2015asymptotically}, one has that  $$d_m(C_{\mathcal{L}}(D,\mu Q)) \geq n - \mu  + g  -1 + 2m - c + h_{c-m},$$ which may allow us to improve the bound in Proposition \ref{propAGnew} for  $\mu \le 2g -2$, since in this case  $k \ge \mu + 1 -g$. Furthermore, we can apply it to bound the RGHWs of a pair of codes.

\begin{prop} \label{propAG} 
Let $C_{\mathcal{L}}(D,\mu_2 Q) \subseteq C_{\mathcal{L}}(D,\mu_1 Q)$ be two one-point algebraic geometric codes of length $n$ and dimension $k_1$ and $k_2$, respectively.
If  $-1 \leq \mu_2 < \mu_1  < n$ and $1 \leq m \leq \min\{k_1,g\}$ then $$d_m(C_{\mathcal{L}}(D,\mu_1 Q)) \geq n - \mu_1  + g  -1 + 2m - c + h_{c-m}$$ where $h_{c-m} = \#(H(Q) \cap (0, c-m])$.
Moreover,  if $1 \leq m \leq \min\{k_1-k_2,g\}$ then 
$$M_m(C_{\mathcal{L}}(D,\mu_1 Q),C_{\mathcal{L}}(D,\mu_2 Q)) \geq n - \mu_1  + g  -1 + 2m - c + h_{c-m}$$
\end{prop}

From Garcia-Stichtenoth's second tower~\cite{garcia1996asymptotic} one obtains codes over any field $\bbF_q$ where $q$ is an even power of a
prime. Garcia and Stichtenoth analyzed the  asymptotic behaviour of the number of rational places and the genus, from which one has that 
the codes beat the Gilbert-Varshamov bound for  $q \geq 49$. This allows us to create sequences of asymptotically good codes.

The Garcia-Stichtenoth's tower $(\mathcal{F}^1,\mathcal{F}^2,\mathcal{F}^3,\ldots)$ in \cite{garcia1996asymptotic} over $\bbF_q$, for $q$ an even power of a prime, is given by: 
\begin{itemize}
	\item $\mathcal{F}^1 = \bbF_{q}(x_1) $
	\item for $\nu > 1$, $\mathcal{F}^\nu = \mathcal{F}^{\nu-1}(x_\nu)$ with $x_\nu$ satisfying $x_\nu^{\sqrt{q}} + x_\nu = \frac{x_{\nu-1}^{\sqrt{q}}}{x_{\nu-1}^{{\sqrt{q}}-1}+1}.$
\end{itemize}
The number of rational points of $\mathcal{F}^{\nu}$ is $N_{q}(\mathcal{F}^\nu) \geq q^{\frac{\nu-1}{2}}(q-{\sqrt{q}})$ and its genus is $g_\nu = g(\mathcal{F}^\nu) = (q^{\frac{1}{2}\left\lfloor\frac{\nu+1}{2}\right\rfloor}-1)(q^{\frac{1}{2}\left\lceil\frac{\nu+1}{2}\right\rceil}-1).$

For every function field $\mathcal{F}^\nu$ the following complete description of the Weierstrass semigroups corresponding 
to a sequence of rational places was given in~\cite{pellikaan1998weierstrass}. Let $Q_\nu$ be the rational point that is the unique pole in $x_1$. The Weierstrass semigroups $H(Q_\nu)$ at $Q_\nu$ in $\mathcal{F}^\nu$ are given recursively by:
\begin{eqnarray*}
H(Q_1) & = & \bbN_0 \\
H(Q_\nu) & = & {\sqrt{q}}\cdot H(Q_{\nu-1}) \cup \{i \in \bbN_0 : i \geq c_\nu\},
\end{eqnarray*}
where $c_\nu = q^{\frac{\nu}{2}} - q^{\frac{1}{2}\left\lfloor\frac{\nu+1}{2}\right\rfloor}$ is the conductor of $H(Q_\nu)$.

An alternative way to obtain these Weiestrass semigroups was described  in \cite{bras2007order}.

\begin{defi} \label{defiSM}
First we define $H(Q_1) = \bbN_0$. For $\nu$ positive integer and $j = 2\left\lfloor \frac{\nu}{2} \right \rfloor$, we define:
$$c_\nu = q^{\frac{\nu}{2}} - q^{\frac{1}{2}(\nu-\frac{j}{2})}, H(Q_\nu) = S_\nu^0 \cup S_\nu^1 \cup S_\nu^2 \cup \ldots \cup S_\nu^{\frac{j}{2}} \cup S_\nu^{\infty},$$
where:
\begin{itemize}
\item $S_\nu^0 = \{x_{\nu,1}\} = \{0\} $
\item For $1 \leq i \leq \frac{j}{2}$, $S_\nu^i = \{x_{\nu,q^{\frac{i-1}{2}}+1}, x_{\nu,q^{\frac{i-1}{2}}+2}, x_{\nu,q^{\frac{i-1}{2}}+3}, \ldots, x_{\nu,q^{\frac{i}{2}}}\}$ where for $1 \leq k \leq q^{\frac{i}{2}} - q^{\frac{i-1}{2}}$ we have that $x_{\nu,q^{\frac{i-1}{2}}+k} = q^{\frac{j}{2}}-q^{\frac{\nu-i+1}{2}}+kq^{\frac{\nu-2i+1}{2}}$ 
\item $S_\nu^{\infty} = [c_\nu+1,\infty)$
\end{itemize}
\end{defi}

Using the previous description of the Weierstrass semigroup $H(Q_\nu)$, we can see that it has the following properties:

\begin{lem} \label{rem1}
With the same notation as before, one has that:
\begin{enumerate}
\item For any $i_1,i_2 \in \{0,1,2,\ldots,\frac{j}{2}-1,\frac{j}{2}, \infty\}$, $i_1 \neq i_2$, we have that $S^{i_1}_\nu \cap S^{i_2}_\nu = \emptyset$.
\item For $i \in \{1,\ldots,\frac{j}{2}\}$ we have that $\# S^i_\nu = q^{\frac{i}{2}}-q^{\frac{i-1}{2}}$ and
 $\#\left(\cup_{r=0}^i S^r_\nu\right) = q^{\frac{i}{2}}$.
\item For $i \in \{1,\ldots,\frac{j}{2}\}$ and for any two consecutive elements $x,y \in S^i_\nu$, with $x > y$, we have that $x - y = q^{\frac{\nu-2i+1}{2}}$.
\item For $i \in \{1,\ldots,\frac{j}{2}\}$. Let $x$ be the first element of $S^i_\nu$ and $y$ the last element of $S^{i-1}_\nu$, we have that $x - y = q^{\frac{\nu-2i+1}{2}}$.
\item For $i \in \{1,\ldots,\frac{j}{2}\}$, and for any $x,y \in \cup_{r=0}^i S^r_\nu$, $x > y$ we have that $x-y \geq q^{\frac{\nu-2i+1}{2}}$. 
\end{enumerate}
\end{lem}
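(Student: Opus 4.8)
The plan is to verify the five assertions directly from the closed formula for $x_{\nu,q^{(i-1)/2}+k}$ in Definition~\ref{defiSM}, in the order (2), (3), (4), (1), (5), since the structural claims (1) and (5) fall out once the arithmetic of (3) and (4) is in place; throughout one uses that $q$ is an even power of a prime, so $q^{1/2}\in\bbN$ and every half-integer power $q^{s/2}$ below is an integer. For (2), the equality $\#S^i_\nu=q^{i/2}-q^{(i-1)/2}$ is read off the index range $q^{(i-1)/2}+1,\dots,q^{i/2}$, and for the cumulative count I would write $\#\bigl(\cup_{r=0}^{i}S^r_\nu\bigr)=1+\sum_{r=1}^{i}\bigl(q^{r/2}-q^{(r-1)/2}\bigr)$ and note that the sum telescopes to $q^{i/2}-1$, leaving $q^{i/2}$.

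The core is the evaluation of the differences in (3) and (4). For (3), two consecutive elements of $S^i_\nu$ differ only in the index $k$ by one, so their difference is the coefficient $q^{(\nu-2i+1)/2}$. For (4) with $i\ge2$ I would first simplify the last element of $S^{i-1}_\nu$, the one with $k=q^{(i-1)/2}-q^{(i-2)/2}$: using $\bigl(q^{(i-1)/2}-q^{(i-2)/2}\bigr)q^{(\nu-2i+3)/2}=q^{(\nu-i+2)/2}-q^{(\nu-i+1)/2}$ it collapses to $q^{j/2}-q^{(\nu-i+1)/2}$, so the first element of $S^i_\nu$ (the $k=1$ term) exceeds it by exactly $q^{(\nu-2i+1)/2}$. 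The junction $i=1$, where the last element of $S^0_\nu$ is $0$, must be matched against the formula separately.

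I would then deduce (1) and (5). For (1), the crossing gaps from (4) are positive, so the finite blocks $S^0_\nu,\dots,S^{j/2}_\nu$ are strictly increasing and hence pairwise disjoint; their disjointness from $S^\infty_\nu=[c_\nu+1,\infty)$ reduces to the single fact that the largest element of $S^{j/2}_\nu$ equals the conductor $c_\nu$, which the same collapse yields. For (5), since the exponent $(\nu-2r+1)/2$ is strictly decreasing in $r$, the smallest of all consecutive differences in $\cup_{r=0}^{i}S^r_\nu$ is the top block's internal step $q^{(\nu-2i+1)/2}$ by (3) and (4); as any $x>y$ in this totally ordered set are separated by at least one such step, $x-y\ge q^{(\nu-2i+1)/2}$ follows.

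The main obstacle I anticipate is the exponent bookkeeping in (4): one must carry the half-integer powers carefully through the telescoping identity and, separately, confirm the two boundary junctions---the bottom one at $S^0_\nu=\{0\}$ and the top one where $S^{j/2}_\nu$ meets $S^\infty_\nu$ at the conductor $c_\nu$. Once these are pinned down, (1), (2), (3) and (5) are routine counting and comparison of monotone powers of $q^{1/2}$.
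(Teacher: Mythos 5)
Your proposal is correct and, for parts (2)--(5), follows essentially the same route as the paper: read the cardinalities off the index range and telescope for (2); take the difference of the closed formula at consecutive $k$ for (3); collapse the last element of $S^{i-1}_\nu$ to $q^{j/2}-q^{(\nu-i+1)/2}$ for (4); and for (5) reduce to the fact that every consecutive gap in $\cup_{r=0}^{i}S^r_\nu$ is at least the top block's step $q^{(\nu-2i+1)/2}$, which is exactly the paper's argument via the predecessor $x_2$ of $x$. The one genuine divergence is part (1): the paper simply cites Theorem~1 of the reference describing these semigroups, whereas you derive disjointness internally from the positivity of the crossing gaps in (4) together with the identification of the largest element of $S^{j/2}_\nu$ with the conductor $c_\nu$ (which the same collapse indeed gives: the last element of $S^{i}_\nu$ is $q^{j/2}-q^{(\nu-i)/2}$, equal to $c_\nu$ at $i=j/2$ for $\nu$ even). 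Your version is more self-contained at the cost of a little extra exponent bookkeeping; the paper's is shorter but imports the fact. Two small points to tidy: the cumulative count in your step (2) already uses disjointness of the blocks, so formally (1) (or at least the monotonicity from (3)--(4)) should precede it; and your separate check of the $i=1$ junction is sound but turns out to be subsumed by the general formula, since for $\nu$ even the expression $q^{j/2}-q^{(\nu-i+1)/2}$ evaluates to $0$ at $i=1$, matching $S^0_\nu=\{0\}$.
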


\begin{proof}
\begin{enumerate}
\item By Theorem 1 in \cite{bras2007order}.
\item Let $i \in \{1,\ldots,\frac{j}{2}\}$, the cardinality of $S^i_\nu$ follows by its definition.
For the second part, by (1), we have that:
$$
\#\left(\cup_{r=0}^i S^r_\nu\right) = \# S^0_\nu + \sum_{r=1}^i \# S^r_\nu = 1 + \sum_{r=1}^i (q^{\frac{r}{2}}-q^{\frac{r-1}{2}}) = 1 + q^{\frac{i}{2}} - 1 = q^{\frac{i}{2}}.
$$
\item Consider two consecutive elements $x, y \in S^i_\nu$, $x > y$. There exists a $k \in \{1,\ldots,q^{\frac{i}{2}} - q^{\frac{i-1}{2}}-1\}$ such that $x = q^{\frac{j}{2}} - q^{\frac{\nu-i+1}{2}} +kq^{\frac{\nu-2i+1}{2}}$ and $y = q^{\frac{j}{2}} - q^{\frac{\nu-i+1}{2}} +(k+1)q^{\frac{\nu-2i+1}{2}}$. It follows that $x-y=q^{\frac{\nu-2i+1}{2}}$. 
\item Let $y$ be the last element of $S^{i-1}_\nu$, i.e. $y = q^{\frac{j}{2}} - q^{\frac{\nu-i+1}{2}}$, and $x$ be the first element of $S^{i}_\nu$, i.e. $x = q^{\frac{j}{2}} - q^{\frac{\nu-i+1}{2}} + q^{\frac{\nu-2i+1}{2}}$. We have that $x - y =q^{\frac{\nu-2i+1}{2}}$.
\item For $i \in \{1,\ldots,\frac{j}{2}\}$, consider $x,y \in \cup_{r=0}^i S^r_\nu$ with $x > y$. This mean that there exists $i_1,i_2 \in \{1,\ldots,i\}$, $i_1 \geq i_2$ such that $x \in S_\nu^{i_1}$, $y \in S_\nu^{i_2}$. Let $x_2$ be the element that precedes $x$ in $H(Q^\nu)$, then we have that $x-y = (x-x_2) + (x_2-y) = q^{\frac{\nu-2i_1+1}{2}} + (x_2-y) \geq q^{\frac{\nu-2i_1+1}{2}} \geq q^{\frac{\nu-2i+1}{2}}$. The second inequality follows by (3) and (4), the third one since $x_2 \geq y$ and the last one since $i_1 \leq i$.
\end{enumerate}
\end{proof}

Applying Proposition~\ref{propAGnew} to code pairs coming from Garcia-Stichtenoth's second tower~\cite{garcia1996asymptotic}, an asymptotic result was given in~\cite[Theorem 23]{geil2015asymptotically}, which combined with Proposition \ref{propAG} allows us to obtain the following result.

\begin{cor} \label{coroAG}
Let $(\mathcal{F}_i)^{\infty}_{i=1}$ be Garcia-Stichtenoth's second tower of function fields over $\bbF_q$, where $q$ is an even power of a prime. Let $(C_i)^{\infty}_{i=1}$ be a sequence of one-point algebraic geometric codes constructed from $(\mathcal{F}_i)^{\infty}_{i=1}$. Consider $\tilde{R}, R, \rho$ with $0 \leq R \leq 1 - \frac{1}{\sqrt{q}-1}$, $0 \leq \tilde{R} < 1 $ and $0 \leq \rho \leq \min\{R, \frac{1}{\sqrt{q}-1}\}$, and assume that $\dim(C_i)/n_i \rightarrow R$ and $\mu_i/n_i \rightarrow \tilde{R}$. For all sequences of positive integers $(m_i)_{i=1}^{\infty}$ with $m_i/n_i \rightarrow \rho$, it holds that $\delta = \liminf_{i\rightarrow \infty} d_m(C_i)/n_i$ satisfies 
\begin{equation}\label{eq1}\delta \geq 1 - R + 2\rho - \frac{1}{\sqrt{q}-1}.\end{equation}
\begin{equation}\label{eq2}\delta \geq 1 - \tilde{R} + 2\rho. \end{equation}
\end{cor}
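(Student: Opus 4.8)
The plan is to feed each code $C_i$ into the finite-length bounds of Propositions~\ref{propAGnew} and~\ref{propAG}, normalise by $n_i$, and pass to the limit; everything hinges on first pinning down the asymptotics of the genus $g_i$ and the conductor $c_i$ of $H(Q_{\nu_i})$ relative to $n_i$. Writing $k_i=\dim C_i$, the two facts I need are
\begin{equation*}
\frac{g_i}{n_i}\longrightarrow\frac{1}{\sqrt q-1}\qquad\text{and}\qquad\frac{c_i-g_i}{n_i}\longrightarrow 0,
\end{equation*}
so that $c_i/n_i\to\frac{1}{\sqrt q-1}$ as well. The first limit expresses that the tower attains the Drinfeld--Vladut bound: it follows from $N_q(\mathcal F^{\nu})\ge q^{(\nu-1)/2}(q-\sqrt q)=q^{\nu/2}(\sqrt q-1)$ together with $g_\nu\sim q^{\nu/2}$, and is already in \cite[Theorem 23]{geil2015asymptotically}; since $n_i$ and $N_q(\mathcal F^{\nu_i})$ differ only by the removed place $Q$, the limit is unaffected. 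For the second limit I would note that $c_i-g_i$ is exactly the number of non-gaps of $H(Q_{\nu_i})$ in $[0,c_i)$; since $c_i$ is the largest element of $\cup_{r=0}^{j/2}S_{\nu_i}^{r}$ and $S_{\nu_i}^{\infty}=[c_i+1,\infty)$, Lemma~\ref{rem1}(2) gives $c_i-g_i=\#\bigl(\cup_{r=0}^{j/2}S_{\nu_i}^{r}\bigr)-1=(\sqrt q)^{\lfloor\nu_i/2\rfloor}-1$, which is of order $\sqrt{g_i}=o(n_i)$.

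Granting these limits, inequality~\eqref{eq1} is \cite[Theorem 23]{geil2015asymptotically}. For $i$ large the hypothesis $1\le m_i\le\min\{k_i,g_i\}$ of Proposition~\ref{propAGnew} holds, using $\rho\le\min\{R,\frac{1}{\sqrt q-1}\}$ against $k_i/n_i\to R$ and $g_i/n_i\to\frac{1}{\sqrt q-1}$; the weak form of that proposition gives $d_{m_i}(C_i)\ge n_i-k_i+2m_i-c_i$, and dividing by $n_i$ the right-hand side tends to $1-R+2\rho-\frac{1}{\sqrt q-1}$, which is~\eqref{eq1}.

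For~\eqref{eq2} I would instead invoke the sharper estimate $d_{m_i}(C_i)\ge n_i-\mu_i+g_i-1+2m_i-c_i+h_{c_i-m_i}$ from Proposition~\ref{propAG}, valid on the same range of $m_i$. Dividing by $n_i$ and using $\mu_i/n_i\to\tilde R$, $2m_i/n_i\to2\rho$, $1/n_i\to0$, $(g_i-c_i)/n_i\to0$, and $h_{c_i-m_i}\ge0$, the $\liminf$ of the right-hand side is at least $1-\tilde R+2\rho$. This is precisely where the bound improves on~\eqref{eq1} in the low-degree range $\mu_i\le 2g_i-2$, since there $k_i$ exceeds $\mu_i-g_i+1$ and hence $\tilde R<R+\frac{1}{\sqrt q-1}$.

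I expect the real work to sit in two spots. The first is the limit $c_i-g_i=o(n_i)$, the only step that uses the fine structure of $H(Q_{\nu_i})$ from Definition~\ref{defiSM} rather than merely its genus, and which must be verified for $\nu_i$ of both parities. The second is the bookkeeping at the boundary values $\rho=R$ or $\rho=\frac{1}{\sqrt q-1}$, where $m_i\le\min\{k_i,g_i\}$ can fail for infinitely many $i$: there I would split the index set, treat the indices with $m_i>g_i$ through the Singleton-type equality $d_{m_i}(C_i)=n_i-k_i+m_i$ of \cite[Corollary 4.2]{TsVl}, and use that the overall $\liminf$ equals the minimum of the $\liminf$s over the pieces.
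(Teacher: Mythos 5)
Your proposal matches the paper's (implicit) derivation exactly: inequality (\ref{eq1}) is the normalised limit of the weak form of Proposition~\ref{propAGnew} (i.e.\ \cite[Theorem 23]{geil2015asymptotically}), and (\ref{eq2}) is the normalised limit of the sharper estimate $d_m \geq n-\mu+g-1+2m-c+h_{c-m}$ of Proposition~\ref{propAG}, with the supporting limits $g_i/n_i\to\frac{1}{\sqrt q-1}$ and $(c_i-g_i)/n_i\to 0$ correctly extracted from the semigroup structure. Your handling of the boundary indices with $m_i>g_i$ via the Singleton equality also goes through, since for these code sequences one always has $\tilde R-R\geq\rho$ (either $\tilde R\geq\frac{1}{\sqrt q-1}$, forcing $\tilde R-R=\frac{1}{\sqrt q-1}\geq\rho$, or else $R=0$ and $\rho=0$), so the resulting value $1-R+\rho$ dominates both bounds.
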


Note that the bound  (\ref{eq2}) is sharper than (\ref{eq1}) for $\frac{1}{\sqrt{q}-1} \leq R \leq 1 - \frac{1}{\sqrt{q}-1}$.

\section{Small codimension}

In this section we give a refined bound on the RGHWs of two nested one-point algebraic geometric codes coming from Garcia-Stichtenoth's towers when the codimension is small.

Before giving such bound, we illustrate the main idea with an example.

\begin{ex} \label{ex1}
Consider $q = 9$ and let $\mathcal{F}_6$ be the $6$-th function field defined by the  Garcia-Stichtenoth's tower over $\bbF_q$. The Weierstrass semigroup $H(Q_6)$ at $Q_6$ in $\mathcal{F}_6$ is
$$H(Q_6) = \{0, 243, 486, 513, 540, 567, 594, 621, 648\} \cup$$ $$ \cup \{3n \mid n \in \bbN\mbox{ and }3n \in [654, 702]\} \cup \{n \in \bbN \mid n > 702\}.$$
We denote these three sets as $A^0$, $B^0$ and $C^0$ respectively. 

For computing $Z(H(Q_6),\mu,m)$ one should find $i_1,\ldots,i_{m-1}$ such that $-(\mu-1) \leq i_1 < i_2 < \cdots < i_{m-1} \leq -1$ and minimize $\#\{\alpha \in \cup_{s=1}^{m-1} (i_s + H(Q_6)) \mid \alpha \notin H(Q_6)\}$.
In this example we fix $i_1 = -20$, thus:
$$i_1 + H(Q_6) = \{-20,223,466,493,520,547,574,601,628\} \cup $$
$$ \cup \{3n-20 \mid n \in \bbN\mbox{ and }3n \in [654, 702]\} \cup \{n \in \bbN \mid n > 682\} =$$
$$ = (i_1 + A^0)f \cup (i_1 + B^0) \cup (i_1 + C^0).$$
Note that $i_1 + A^0$ and $H(Q_6)$ are disjoint since $-i_1 = 20 < 27$ and $|x-y| \geq 27$ for any $x,y \in A^0$ $x \neq y$. For the same reason for any $-20 < i_2 < \cdots < i_{m-1} \leq -1$, we have that $i_{m-1} + A^0, i_{m-2} + A^0, \ldots , i_2 + A^0, i_1 + A^0$ and $H(Q_6)$ are disjoint. It follows that $\cup_{v=1}^{m-1} (i_v + A^0) \subseteq \{\alpha \in \cup_{v=1}^{m-1} (i_v + H(Q_6)) \mid \alpha \notin H(Q_6)\}$ and $\# \cup_{v=1}^{m-1} (i_v + A^0) = \sum_{v=1}^{m-1} \#(i_v + A^0) = (m-1)\#A^0 = 9(m-1)$. 

The same argument does not hold for $i + B^0$ (or $i + C^0$) because there exists $x,y \in B^0$ (or $C^0$) such that $|x-y| = 3$ and $-i_1 > 3$ thus it is possible that $(i_v + B^{0}) \cap B^0 \neq \emptyset$ (or $(i_v + C^{0}) \cap C^0 \neq \emptyset$) for some $v=1,\ldots,{m-1}$.

Note that $\#((i_1 + C^0) \backslash C^0) = i_1 = 20$, but  $(i_1 + C^0) \backslash C^0$ may intersect  $A^0 \cup B^0$. Therefore we consider $(i_1 + C^0) \backslash \sqrt{q}\bbN$, in this way $(i_1 + C^0) \backslash (C^0 \cup \sqrt{q}\bbN)$ and $H(Q_6)$ are disjoint.
It follows that if $i_1 = -20$, then $Z(H(Q_6),\mu,m) \geq \# \cup_{v=1}^{m-1} (i_v + A^0) + \# ((i_1 + C^0) \backslash (C^0 \cup \sqrt{q}\bbN)) = (m-1)\cdot 9 + \left\lfloor20 \frac{2}{3}\right\rfloor$. 
\end{ex}

As we can see from previous example, we do not consider the sets $B^0$, $i_1 + B^0$, $\ldots$, $i_{m-1}+B^0$ because of their intersections with other sets. 
In general, we will also consider all the possible values $-i_1$ in the range $[m-1,\mu-1]$ to obtain the following bound.

\begin{teo} \label{propemme}
Let $\nu$ be an even positive integer and $q$ an even power of a prime. Consider two one-point algebraic geometric codes $C_2 = C_{\mathcal{L}}(D,\mu_2Q) \subsetneq C_1  = C_{\mathcal{L}}(D,\mu_1Q)$ of length $n$ built on the $\nu$-th Garcia-Stichtenoth's function field over $\bbF_q$ and $\mu = \mu_1-\mu_2$. For $\mu < q^{\frac{\nu+1}{2}}$, $m=1,\ldots,\mu$, consider $u^* =  \frac{2}{3}\left(1+\frac{\nu}{4}+\log_q\left(\frac{m-1}{2(\sqrt{q}-1)}\right)\right)$ and $\beta = \min\{ 2q^{-\frac{\nu+1}{4}}(\sqrt{q}-1)(\mu-1)^{\frac{3}{2}}+1, \frac{1}{4} q^{\frac{\nu-5}{2}}(\sqrt{q}-1)^{-2}+1\}$, we have that:

$$M_{m}(C_1,C_2) \geq n - \mu_1 + g(m)$$
where
$$g(m) = \begin{cases} 
\min\big\{(m-1)q^{\frac{\nu}{4}-\frac{u}{2}} + q^{u-\frac{1}{2}}(1-q^{-\frac{1}{2}}) - 1 : 	& \mbox{ if }m > \beta \\
\hspace{0.5cm} u \in \{ \log_q(m-1)-\frac{1}{2}, \log_{q}(\mu-1)+\frac{3}{2}\}\big\} 			& \\
(m-1)q^{\frac{\nu}{4}-\frac{u^*}{2}} + q^{u^*-\frac{1}{2}}(1-q^{-\frac{1}{2}}) - 1 			& \mbox{ if }m \leq \beta, m \neq 1 \\
0																								& \mbox{ if }m = 1  
\end{cases}
$$
\end{teo}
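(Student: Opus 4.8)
The plan is to apply Theorem \ref{teomu}, which reduces the statement to the combinatorial lower bound $Z(H(Q_\nu),\mu,m)\ge g(m)$, and then to carry out, for an arbitrary admissible tuple $-(\mu-1)\le i_1<\cdots<i_{m-1}\le -1$, the counting argument illustrated in Example \ref{ex1}, optimised over the choice of $-i_1$. The case $m=1$ is immediate: the index set $\{i_1,\ldots,i_{m-1}\}$ is empty, so the counted gap set is empty and $Z(H(Q_\nu),\mu,1)=0=g(1)$. So assume $m\ge 2$, fix an admissible tuple, and write $t=-i_1$, so that $m-1\le t\le \mu-1$ and $|i_v|\le t$ for every $v$.

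First I would split $H(Q_\nu)$ as in Example \ref{ex1} into a sparse head $A^0=\bigcup_{r=0}^{i}S_\nu^r$, a middle block $B^0$, and the conductor part $C^0=S_\nu^{\infty}$, where $i=i(t)$ is chosen maximal subject to $t<q^{\frac{\nu-2i+1}{2}}$. By Lemma \ref{rem1}(5) any two elements of $A^0$ differ by at least $q^{\frac{\nu-2i+1}{2}}>t\ge|i_v|$; this guarantees that the $m-1$ translates $i_1+A^0,\ldots,i_{m-1}+A^0$ are pairwise disjoint and disjoint from $H(Q_\nu)$, so they contribute exactly $(m-1)\#A^0=(m-1)q^{i/2}$ elements to the gap set, using Lemma \ref{rem1}(2). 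For the conductor I would use only the single translate $i_1+C^0=[c_\nu+1-t,\infty)$: removing $C^0$ and then the multiples of $\sqrt q$ leaves $(i_1+C^0)\setminus(C^0\cup\sqrt q\,\bbN)$, a set of $t-\#\{\text{multiples of }\sqrt q\text{ in }[c_\nu+1-t,c_\nu]\}\ge t(1-q^{-\frac12})-1$ integers outside $H(Q_\nu)$ (they are below the conductor and not multiples of $\sqrt q$, whereas the non-conductor part of $H(Q_\nu)$ lies in $\sqrt q\,\bbN$). Since these lie in $[c_\nu+1-t,c_\nu]$ while the translates of $A^0$ lie strictly below $c_\nu+1-t$, the two contributions are disjoint, and hence for every admissible tuple with $-i_1=t$ one gets a gap count $\ge (m-1)q^{i(t)/2}+t(1-q^{-\frac12})-1$.

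The next step is to optimise this bound over $t$. Writing $t=q^{u-\frac12}$ and using $q^{i(t)/2}\ge q^{\frac{\nu}{4}-\frac{u}{2}}$ (which follows directly from the maximality of $i(t)$, since failure of the defining inequality at $i(t)+1$ forces $i(t)\ge\frac{\nu}{2}-u$), the per-tuple bound becomes $f(u)=(m-1)q^{\frac{\nu}{4}-\frac{u}{2}}+q^{u-\frac12}(1-q^{-\frac12})-1$, and $Z(H(Q_\nu),\mu,m)$ is at least the minimum of $f$ over the feasible $u$-range determined by $m-1\le t\le\mu-1$. The function $f$ is a sum of two exponentials in $u$ with exponents of opposite sign, hence strictly convex, so its minimum is attained either at the interior stationary point or at an endpoint of the range. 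Solving $f'(u)=0$ gives precisely $u^{*}=\frac23\bigl(1+\frac{\nu}{4}+\log_q\frac{m-1}{2(\sqrt q-1)}\bigr)$, and substituting $u^{*}$ yields exactly the value recorded in the $m\le\beta$ branch of $g(m)$.

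Finally I would determine when $u^{*}$ is feasible. Comparing $u^{*}$ with the two endpoints of the $u$-range (corresponding to $t=\mu-1$ and $t=m-1$) produces two inequalities on $m$; requiring both gives the condition $m\le\beta$, with the bounds $2q^{-\frac{\nu+1}{4}}(\sqrt q-1)(\mu-1)^{\frac32}+1$ and $\frac14q^{\frac{\nu-5}{2}}(\sqrt q-1)^{-2}+1$ arising from the upper and the lower endpoint respectively. In that regime the convex minimum is the interior value $f(u^{*})$; when $m>\beta$ the stationary point leaves the feasible interval and, by convexity, the minimum is attained at an endpoint, which is the $\min$ over $u\in\{\log_q(m-1)-\frac12,\ \log_q(\mu-1)+\frac32\}$ in the first branch of $g(m)$. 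The main obstacle I anticipate is the bookkeeping in the counting step: choosing $i(t)$ correctly, verifying via Lemma \ref{rem1} that the $A^0$-translates and the conductor gaps are simultaneously disjoint and contained in the gap set while the block $B^0$ is safely discarded, and controlling the floor errors relating the integer parameter $t$ to the continuous variable $u$ (these are what the additive $-1$ and the shifted endpoints in $g(m)$ absorb).
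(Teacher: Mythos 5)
Your proposal is correct and follows essentially the same route as the paper's proof: reduce to bounding $Z(H(Q_\nu),\mu,m)$ via Theorem \ref{teomu}, discard the middle block and count only the pairwise-disjoint translates of the sparse head $\bigcup_{r\le i}S^r_\nu$ together with $(i_1+C^0)\setminus(H(Q_\nu)\cup\sqrt{q}\,\bbN)$, then minimise $f(u)=(m-1)q^{\frac{\nu}{4}-\frac{u}{2}}+q^{u-\frac12}(1-q^{-\frac12})-1$ over the feasible $u$-range using the stationary point $u^*$ versus the endpoints. Your parametrisation by $t=-i_1$ with $i(t)$ maximal is just a rephrasing of the paper's $u=\lfloor\log_q(-i_1)+\tfrac12\rfloor$ with cutoff $\tfrac{\nu}{2}-u$, so the two arguments coincide.
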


\begin{proof}
 
By Theorem \ref{teomu} we have that $M_m(C_1,C_2) \geq n - \mu_1 + Z(H(Q_{\nu}),\mu,m)$ thus we will estimate $Z(H(Q_{\nu}),\mu,m)$. If $m=1$, $Z(H(Q_{\nu}),\mu,1)=0$, otherwise we denote the conductor of $H(Q_{\nu})$ by $c$.
Set $-(\mu-1)\leq i_1 < \cdots < i_{m-1} \leq -1$, we define $u(i_1) = \left \lfloor \log_{q}(-i_{1}) + \frac{1}{2} \right \rfloor$, then $q^{u(i_1)-\frac{1}{2}} \leq -i_1 < q^{u(i_1)+\frac{1}{2}}$. For the sake of simplicity we write $u$ instead of $u(i_1)$.  

To estimate $Z(H(Q_{\nu}),\mu,m)$ we consider the following two sets: $A(i_1,i_2,\ldots,i_v) = \{\alpha \in \cup_{v=1}^{m-1}(i_v+A^0(u))\mid \alpha \notin H(Q_\nu)\}$ where $A^0(u) = \bigcup_{i=0}^{\frac{\nu}{2}-u} S^i_\nu$ and $C(i_1) = (i_1 + C^0) \backslash H(Q_{\nu})$ where $C^0 = \{\alpha \in \bbN \mid \alpha > c\}$. Again, to simplify the notation we  write $A = A(i_1,i_2,\ldots,i_v)$, $A^0 = A^0(u)$ and $C = C(i_1)$.
By construction $A \cup C \subseteq \{\alpha \in \cup_{s=1}^{m-1}(i_s + H(Q_{\nu})) \mid \alpha \notin H(Q_{\nu})\}$ and $A \cap C = \emptyset$. Thus we have that $Z(H(Q_{\nu}),\mu,m) \geq \# A + \# C$. 

We start by computing the cardinality of $A$. By definition of $A^0$ for any $x, y \in A^0$, $x \neq y$ there exist $i_x, i_y \in \{0,\ldots,\frac{\nu}{2}-u\}$ such that $x \in S^{i_x}_\nu$ and $y \in S^{i_y}_\nu$. We can assume without loss of generality that $i_x \geq i_j$ and $x > y$, then we obtain by (6) in Lemma \ref{rem1} that $x - y \geq q^{\frac{\nu-2i_x+1}{2}}$. Since $\mu < q^{\frac{\nu+1}{2}}$, then $i_x \leq \frac{\nu}{2}-u \leq 0$ and $|x - y| \geq q^{\frac{\nu-2i_x+1}{2}} \geq q^{\frac{\nu-2(\frac{\nu}{2}-u)+1}{2}} = q^{u+\frac{1}{2}}$. Thus for $x,y \in A^0$, $x > y$, we have that  $x - y \geq q^{u+\frac{1}{2}}$.

Since $-i_1 < q^{u+\frac{1}{2}}$, it follows that $(j_1 + A^0) \cap (j_2 + A^0) = \emptyset$ for any $j_1, j_2 \in [i_{1},0]$.
Therefore we have that $\#A = \#\bigcup_{v=1}^{m-1}(i_v+A^0) = (m-1)\#A^0$. By (2) in Lemma \ref{rem1}, we have $\#A^0 = \#(\bigcup_{i=0}^{\frac{\nu}{2}-u} S^i_\nu) = q^{\frac{\nu}{4} - \frac{u}{2}}$. Thus  $\#A = (m-1)q^{\frac{\nu}{4} - \frac{u}{2}}$.

Furthermore, $\#C = \#((i_1 + C^0) \backslash H(Q_{\nu})) = \#([c+i_{1},c) \backslash H(Q_\nu)) \geq \#([c+i_{1},c) \backslash \sqrt{q}\bbN) = \left \lfloor -i_{1}(1-q^{-\frac{1}{2}}) \right \rfloor $ where the inequality follows since $H(Q_\nu) \cap [0,c) \subset \sqrt{q}\bbN$.
Hence,
\begin{eqnarray*}
M_{m}(C_1,C_2) & \geq & n - \mu_1 + Z(H(Q_{\nu}),\mu,m) \\
& \geq & n - \mu_1 + \min_{i_{1} \in \{-(\mu-1),\ldots,-(m-1)\}} (\#A + \#C) \\
& \geq & n - \mu_1 + \min \big\{(m-1)q^{\frac{\nu}{4} - \frac{u}{2}} +  \left \lfloor-i_1\left(1-q^{-\frac{1}{2}}\right)\right \rfloor \mid\\
& &  \mid i_{1} \in \{-(\mu-1),\ldots,-(m-1)\}\big\}.  
\end{eqnarray*}

One could try to minimize the previous expression bounding $u$ by $\log_{q}(-i_{1}) + \frac{1}{2}$. However, the obtained bound is too loose. Hence, we consider the minimum among all possible values of $u$ instead of $i_1$:
\begin{eqnarray*}
M_{m}(C_1,C_2) & \geq & n - \mu_1 + \min \bigg\{(m-1)q^{\frac{\nu}{4} - \frac{u}{2}} +  \left \lfloor q^{u-\frac{1}{2}}\left(1-q^{-\frac{1}{2}}\right)\right \rfloor \mid\\
 & &   {\hspace{-2 cm} \mid u \in \left\{ \left \lfloor \log_q(m-1)+\frac{1}{2} \right \rfloor, \left \lfloor \log_q(m)+\frac{1}{2} \right \rfloor , \ldots, \left \lfloor \log_q(\mu-1)+\frac{1}{2} \right \rfloor \right\} \bigg\}} \\
& \geq & n - \mu_1 + \min \bigg\{(m-1)q^{\frac{\nu}{4} - \frac{u}{2}} +  q^{u-\frac{1}{2}}\left(1-q^{-\frac{1}{2}}\right) - 1 \mid\\
& &  {\hspace{-2 cm} \mid u \in \left\{ \left \lfloor \log_q(m-1)+\frac{1}{2} \right \rfloor, \left \lfloor \log_q(m)+\frac{1}{2} \right \rfloor , \ldots, \left \lfloor \log_q(\mu-1)+\frac{1}{2} \right \rfloor \right\} \bigg\}} \\
& \geq & n - \mu_1 + \min \bigg\{(m-1)q^{\frac{\nu}{4} - \frac{u}{2}} +  q^{u-\frac{1}{2}}\left(1-q^{-\frac{1}{2}}\right) - 1 \mid\\
& & \mid \log_q(m-1)-\frac{1}{2} \leq u \leq \log_q(\mu-1)+\frac{1}{2} \bigg\}, \\
\end{eqnarray*}where the second to last inequality is obtained since $-i_1 \geq q^{u-\frac{1}{2}}$. 
We define $f(u) = (m-1)q^{\frac{\nu}{4} - \frac{u}{2}} + q^{u-\frac{1}{2}}\left(1-q^{-\frac{1}{2}}\right) - 1$. In this way our bound becomes: 
$$M_{m}(C_1,C_2) \geq n - \mu_1 + \min\left\{f(u) \mid  \log_q(m-1)-\frac{1}{2} \leq u \leq \log_q(\mu-1)+\frac{1}{2}\right\}.$$
By looking the derivative of $f(u)$, one can see that $f(u)$ only has a minimum at $u^* = \frac{2}{3}\left(1+\frac{\nu}{4}+\log_q\left(\frac{m-1}{2(\sqrt{q}-1)}\right)\right)$. However, 
it does not always hold that $\log_q(m-1)-1/2 \leq u^* \leq \log_q(\mu-1)+1/2$. This happens when either $u^* < \log_q(m-1)-\frac{1}{2}$ or $u^* > \log_q(\mu-1)+\frac{1}{2}$. The first case is equivalent to $m > \frac{1}{4} q^{\frac{\nu-5}{2}}(\sqrt{q}-1)^{-2}+1$, the second one to $m > 2q^{-\frac{\nu+1}{4}}(\sqrt{q}-1)(\mu-1)^{\frac{1}{2}}+1$.
Thus if $m > \beta = \min\{ 2q^{-\frac{\nu+1}{4}}(\sqrt{q}-1)(\mu-1)^{\frac{3}{2}}+1, \frac{1}{4} q^{\frac{\nu-5}{2}}(\sqrt{q}-1)^{-2}+1\}$, then the minimum is reached
in $\log_q(m-1)-\frac{1}{2}$ or $\log_q(\mu-1)+\frac{1}{2}$.
\end{proof}

The previous result has an asymptotic implication as well.

\begin{cor} \label{cormu}
Let $q$ be an even power of a prime, $0 \leq \tilde{R}_2 \leq \tilde{R}_1 < 1$, and $\tilde{R} = \tilde{R}_1 - \tilde{R}_2 < \frac{1}{\sqrt{q}-1}$. There exists a sequence of pairs of one-point AG codes $C_{2,i} = C_{\mathcal{L}}(D_i,\mu_{2,i}Q) \subsetneq C_{1,i} = C_{\mathcal{L}}(D_i,\mu_{1,i}Q)$, such that:
$n_i = n(C_{2,i}) = n(C_{1,i}) \rightarrow \infty$, $\mu_{j,i} / n_i $ $
\rightarrow \tilde{R}_j$  when $i \rightarrow \infty$, for $j = 1,2$. For
a given $\rho$ let 
$m_i$  be such that $m_i / n_i \rightarrow \rho$ when $i \rightarrow
\infty$  and let  $M = \liminf M_{m_i}(C_{1,i},C_{2,i}) / n_i$. It holds that:

\begin{equation*}M \geq 1 - \tilde{R}_1 + g(\rho),\end{equation*}
where
\begin{equation*}
g(\rho) = \begin{cases}
\min_{w \in \{\rho, \tilde{R}\}}\left\{\rho (w(q-\sqrt{q}))^{-\frac{1}{2}} + \frac{w}{q}(q-\sqrt{q})\right\}  & \mbox{ if }\rho > \beta, \\
\left(\frac{2\rho^2}{q}\right)^{\frac{1}{3}} + \frac{1}{\sqrt{q}} \left(\frac{\rho}{2}\right)^{\frac{2}{3}}  & \mbox{ if }\rho \leq \beta, \rho \neq 0, \\
0 		& \mbox{ if } \rho = 0,
\end{cases}
\end{equation*}
and $\beta = \min\left\{\frac{1}{4}q^{-\frac{5}{2}}(\sqrt{q}-1)^{-3}, 2q^{-\frac{1}{4}} (R\sqrt{q}-R)^{\frac{3}{2}}\right\}$.
\end{cor}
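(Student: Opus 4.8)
The plan is to obtain the corollary as the $i\to\infty$ limit of the finite bound in Theorem~\ref{propemme}. First I would set up and verify the hypotheses along a suitable sequence: choose even $\nu_i\to\infty$, lengths $n_i\le N_q(\mathcal{F}^{\nu_i})-1\to\infty$, and $\mu_{j,i}=\lfloor \tilde{R}_j n_i\rfloor$ (perturbed by $o(n_i)$ so that $\mu_{2,i}<\mu_{1,i}$ even when $\tilde{R}_1=\tilde{R}_2$), so that $\mu_{j,i}/n_i\to\tilde{R}_j$ and $m_i/n_i\to\rho$. The requirement $\mu_i<q^{(\nu_i+1)/2}$ of Theorem~\ref{propemme} then follows from $\tilde{R}<\frac{1}{\sqrt{q}-1}$ together with the asymptotic point count. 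Applying the theorem and dividing by $n_i$ gives
$$\frac{M_{m_i}(C_{1,i},C_{2,i})}{n_i}\ \ge\ 1-\frac{\mu_{1,i}}{n_i}+\frac{g(m_i)}{n_i},$$
so since $\mu_{1,i}/n_i\to\tilde{R}_1$ the whole statement reduces to proving $\liminf_i g(m_i)/n_i=g(\rho)$.

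The engine of the computation is the asymptotic relation $n_i\sim(\sqrt{q}-1)q^{\nu_i/2}$, coming from $N_q(\mathcal{F}^{\nu})\ge q^{(\nu-1)/2}(q-\sqrt{q})=(\sqrt{q}-1)q^{\nu/2}$ (sharp in the limit); equivalently $q^{\nu_i/2}\sim n_i/(\sqrt{q}-1)$. Together with $m_i\sim\rho n_i$ and $\mu_i\sim\tilde{R}n_i$, I would substitute these into each branch of the definition of $g(m)$ and let $i\to\infty$, discarding the additive $-1$, the floors, and the gap between $n_i$ and the exact number of places, all of which are $o(n_i)$ after normalization. In the branch $m\le\beta$ one puts $u=u^{*}$ into $f(u)=(m-1)q^{\nu/4-u/2}+q^{u-1/2}(1-q^{-1/2})-1$; using the explicit $u^{*}$, both summands scale like $n_i$ and simplify to the cube-root expression in the $\rho\le\beta$ line. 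In the branch $m>\beta$ the finite minimum is attained at one of the two endpoints $u\in\{\log_q(m-1)-\tfrac12,\ \log_q(\mu-1)+\tfrac12\}$, which correspond to $-i_1$ of order $m_i$ and of order $\mu_i$, that is, to the two values $w\in\{\rho,\tilde{R}\}$; evaluating $f$ at each endpoint and normalizing produces the two terms over which the minimum is taken. Running the same substitution on the two finite thresholds defining $\beta$ shows that $\beta/n_i$ converges to the asymptotic $\beta$ of the corollary (with the rate appearing there read as $\tilde{R}$).

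Finally I would assemble the $\liminf$: in each regime $g(m_i)/n_i$ is a continuous function of the ratios $m_i/n_i,\mu_i/n_i\to\rho,\tilde{R}$, so its limit is the corresponding branch of $g(\rho)$, while $\rho=0$ gives $g(\rho)=0$ from $g(1)=0$ and monotonicity. I expect the main obstacle to be the boundary behaviour of the case split together with the collapse of the discrete $u$-grid: one must justify that $\liminf_i g(m_i)/n_i$ equals the clean piecewise formula when $m_i/n_i$ approaches $\beta$ and $m_i$ may straddle the finite threshold $\beta_i$ infinitely often, and that the minimum over the finite, $\nu_i$-dependent set of admissible $u$-values tends to the minimum over the two limiting endpoints rather than to an interior artefact. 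Controlling these, together with a uniform $o(n_i)$ bound on the neglected lower-order terms, is where the real work lies.
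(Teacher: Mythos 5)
Your proposal follows exactly the paper's route: the paper's own proof is a one-line appeal to ``taking the limit of the bound obtained in Theorem~\ref{propemme}'' along the Garcia--Stichtenoth tower, which is precisely what you carry out, including the key normalization $q^{\nu_i/2}\sim n_i/(\sqrt{q}-1)$ and the identification of the two endpoint values of $u$ with $w\in\{\rho,\tilde{R}\}$. In fact your write-up is more explicit than the paper's about the boundary/threshold issues and about which lower-order terms are being discarded, so there is nothing to object to.
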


\begin{proof}
Consider the Garcia-Stichtenoth's tower $(\mathcal{F}_1,\mathcal{F}_2,\ldots)$ over $\bbF_q$ described at the end of section \ref{secintroduction}, and $0 \leq \mu_{2,i} < \mu_{1,i} \leq n_i - 1$ with  $\mu_{j,i} / n_i \rightarrow \tilde{R}_j$ for $j = 1,2$. Now consider $C_{j,i} = C_{\mathcal{L}}(D_i,\mu_{j,i}Q)$ for $j = 1,2$, where $D_i$ is a divisor of degree $n_i - 1$ and with $n_i - 1$ distinct places not containing $Q_i$, which is the unique pole of $x_1 \in \mathcal{F}_i$.
By taking the limit of the bound obtained in Theorem \ref{propemme}, the corollary holds.\\
\end{proof}

Note that if we assume that $C_{2,i}$ is the zero code for all $i$, then $\liminf M_{m_i}(C_{1,i},\{\vec{0}\})$ is the asymptotic  value of the $m_i$-th general Hamming weight of $C_{i,1}$. For $\tilde{R} < \frac{1}{4(q-\sqrt{q})}$, the bound in Corollary \ref{cormu} is sharper than the one obtained in \cite[Theorem  23]{geil2015asymptotically}.

In the following graph we compare the bound from Corollary~\ref{coroAG} (the dashed curve) with the bound from
Corollary~\ref{cormu} (the solid curve). The first axis represents $\rho = \lim m_i / n_i$, and the second axis represents 
$\delta = \liminf M_{m_i}(C_{1,i},\{\vec{0}\})$.

\begin{figure}[H]
\begin{center}
\includegraphics[scale=0.4]{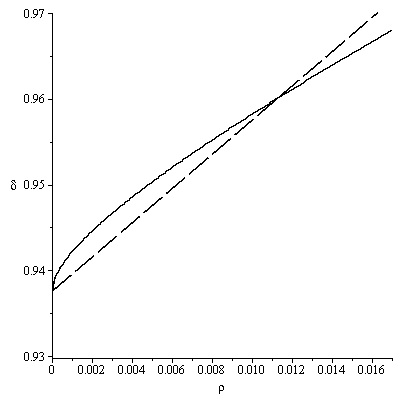}
\end{center}
\end{figure} 

\section{The highest RGHW} \label{highRGHW}

As we illustrated at the beginning of section \ref{secintroduction}, for any $n - M_{\ell}(C_1,C_2) + 1$ obtained shares an eavesdropper may recover at least one $q$-bit of the secret. In this section, for $2g -1 \leq \mu_2 < \mu_1 \le n-1$, we obtain a refined bound for the highest RGHW of two one-point algebraic geometric codes obtained from Garcia-Stichtenoth's towers, i.e. $M_{\ell}(C_1,C_2)$.

\begin{prop} \label{propmu}
Let $\nu$ be an even positive integer and $2g -1 \leq \mu_2 < \mu_1 \le n-1$. Consider two one-point algebraic geometric codes $C_2 = C_{\mathcal{L}}(D,\mu_2Q) \subsetneq C_1  = C_{\mathcal{L}}(D,\mu_1Q)$ built on the $\nu$-th Garcia-Stichtenoth's tower. We have that $\mu = \mu_1 -\mu_2 =\dim (C_1) - \dim (C_2) = \ell$ and
 
\begin{equation*}M_{\mu}(C_1,C_2) \geq n - \dim C_2 \quad \mbox{ if } \mu \geq q^{\frac{\nu-1}{2}},\end{equation*}
\begin{equation*}M_{\mu}(C_1,C_2) \geq n - \dim C_2 - \bigg(q^{\frac{\nu-1}{2}}\sum_{i=1}^{\lfloor \frac{\nu+1}{2} - \log_q(\mu) \rfloor-1}(q^{1-\frac{i}{2}} - q^{-\frac{i}{2}}) + \end{equation*}
\begin{equation*}\  \  \  \ + (q^{\frac{\nu+1}{2}-\lfloor \frac{\nu+1}{2} -  \log_q(\mu) \rfloor}-\mu)q^{\frac{\lfloor \frac{\nu+1}{2} -  \log_q(\mu) \rfloor}{2}}\bigg) \quad \mbox{ if }\mu < q^{\frac{\nu-1}{2}}.
\end{equation*}
 
\end{prop}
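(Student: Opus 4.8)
The plan is to apply Theorem~\ref{teomu} with $m = \ell = \mu$ and to exploit that at the top weight the minimization in $Z$ disappears. Since $-(\mu-1)\le i_1 < i_2 < \cdots < i_{\mu-1}\le -1$ ranges over a set of exactly $\mu-1$ integers, the indices are forced to be $\{i_1,\dots,i_{\mu-1}\}=\{-1,-2,\dots,-(\mu-1)\}$. Hence $Z(H(Q_\nu),\mu,\mu)$ is the single quantity
\begin{equation*}
Z = \#\Big(\bigcup_{t=1}^{\mu-1}\big(H(Q_\nu)-t\big)\setminus H(Q_\nu)\Big),
\end{equation*}
and $\alpha$ lies in this set exactly when $\alpha\notin H(Q_\nu)$ while $H(Q_\nu)\cap[\alpha+1,\alpha+\mu-1]\neq\emptyset$. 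So the whole problem becomes a counting problem on the gaps of $H(Q_\nu)$, with no optimization left.

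First I would translate the target into a statement about $Z$. Because $\mu_2\ge 2g-1$, Riemann--Roch gives $\dim C_j=\mu_j-g+1$, so $n-\dim C_2 = n-\mu_1+(\mu+g-1)$, and it suffices to prove $Z\ge \mu+g-1-E$ where $E$ is the claimed correction ($E=0$ in the first case). I would count $Z$ through the run structure of the gaps: the negative integers $-1,\dots,-(\mu-1)$ are always counted, contributing $\mu-1$, and a maximal run of $\ell$ consecutive gaps contributes $\min(\ell,\mu-1)$ (those gaps close enough below the next semigroup element). Since $\mu+g-1=(\mu-1)+\sum_k \ell_k$ with $\sum_k\ell_k=g$ the total number of gaps, this gives $\mu+g-1-Z=\sum_k\max(0,\ell_k-\mu+1)$, the number of \emph{deep} gaps lying in runs of length at least $\mu$.

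The core step is to read off the run data from Definition~\ref{defiSM} and Lemma~\ref{rem1}. By parts (3) and (4) of Lemma~\ref{rem1}, consecutive elements inside $S^i_\nu$, and the jump from $S^{i-1}_\nu$ to $S^i_\nu$, differ by $q^{(\nu-2i+1)/2}$; thus level $i$ produces $\#S^i_\nu=q^{i/2}-q^{(i-1)/2}$ maximal gap-runs, each of length $q^{(\nu-2i+1)/2}-1$, for $i=1,\dots,\nu/2$, and no gaps survive past the conductor. A level-$i$ run is long (contributes deep gaps) iff $q^{(\nu-2i+1)/2}>\mu$, i.e. iff $i<\tfrac{\nu+1}{2}-\log_q\mu$, in which case it contributes $q^{(\nu-2i+1)/2}-\mu$ deep gaps. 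If $\mu\ge q^{(\nu-1)/2}$, then even the longest run (level $i=1$, of length $q^{(\nu-1)/2}-1<\mu$) is short, so $E=0$ and $Z=\mu+g-1$, yielding $M_\mu\ge n-\dim C_2$; this settles the first case exactly. For $\mu<q^{(\nu-1)/2}$ there are genuinely long levels, and I would bound the deep gaps at the fully long levels $i=1,\dots,J-1$ by a clean over-estimate of the gaps sitting there, which telescopes (using $\#(\cup_{r=0}^iS^r_\nu)=q^{i/2}$ from part (2) of Lemma~\ref{rem1}) to $q^{(\nu-1)/2}\sum_{i=1}^{J-1}(q^{1-i/2}-q^{-i/2})$, while treating the single crossover level $i=J$ with the genuine $\mu$-correction, contributing the boundary term $(q^{(\nu+1)/2-J}-\mu)q^{J/2}$, where $J:=\lfloor\tfrac{\nu+1}{2}-\log_q\mu\rfloor$.

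I expect the main obstacle to be the bookkeeping at the crossover level $i=J$, where $q^{(\nu-2i+1)/2}$ passes through $\mu$: one must pin down via the floor exactly which levels are long, and verify that the over-estimate used at the long levels together with the boundary term dominates the exact deep-gap count $\sum_k\max(0,\ell_k-\mu+1)$, so that the resulting inequality $Z\ge \mu+g-1-E$ is valid. The half-integer exponents and the packaging of the geometric sums into the stated closed form are where the computation is fiddly, but conceptually everything reduces to counting deep gaps run by run.
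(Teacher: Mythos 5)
Your proposal is correct, and it takes a genuinely different route from the paper. The paper also starts from Theorem~\ref{teomu} with the forced index set $\{-1,\ldots,-(\mu-1)\}$ and the Riemann--Roch identity $\dim C_2=\mu_2-g+1$, but it then evaluates $Z(H(Q),\mu,\mu)$ by \emph{decreasing induction} on $\mu$: it writes $Z(H(Q),\mu,\mu)=Z(H(Q),\mu-1,\mu-1)+\#T(\mu)$ with $T(\mu)=\{\alpha\in -(\mu-1)+H(Q)\mid \alpha\notin\cup_{v=0}^{\mu-2}(-v+H(Q))\}$, and uses Lemma~\ref{rem1} to show $\#T(\mu)=q^{u_1(\mu)/2}$, splitting the inductive step into the sub-cases where $u_1$ stays constant or jumps. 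You replace the induction by a single direct count: $Z=(\mu-1)+\sum_k\min(\ell_k,\mu-1)$ over the maximal gap-runs, so the deficiency from $\mu+g-1$ is exactly the number of deep gaps $\sum_k\max(0,\ell_k-\mu+1)$, read off level by level from Definition~\ref{defiSM} and Lemma~\ref{rem1}. The two computations are equivalent --- your observation that $\alpha$ is counted iff a semigroup element lies in $[\alpha+1,\alpha+\mu-1]$ is precisely what makes $\#T(\mu)$ equal to the number of semigroup elements preceded by a run of length at least $\mu-1$, and summing over $\mu$ recovers your run decomposition --- but your version is more transparent: it makes the case $\mu\ge q^{(\nu-1)/2}$ immediate, it gives an exact expression for $Z$ for any semigroup with known run structure, and it isolates the only remaining work, namely checking that the stated closed form over-estimates the exact deep-gap count. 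That check does go through (for each fully long level $i\le J-1$ the term $q^{(\nu+1-i)/2}-q^{(\nu-1-i)/2}$ dominates the exact $(q^{i/2}-q^{(i-1)/2})(q^{(\nu-2i+1)/2}-\mu)$, and the boundary term dominates the level-$J$ contribution since $q^{(\nu-2J+1)/2}\ge\mu$), but you leave it as a plan rather than executing it, and that is the one place your write-up needs completing. A by-product of your exact count worth recording: the displayed correction term strictly over-estimates the true deficiency (by at least $q^{(\nu-1)/2}-\mu$ from the crossover level), so the proposition's inequality is safe but not tight --- whereas the paper's proof asserts an exact closed form for $Z(H(Q),\mu,\mu)$ that your count shows cannot hold with equality.
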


\begin{proof}
Since $2g -1 \leq \mu_2 < \mu_1 \le n-1$, then $\mu = \mu_1 - \mu_2 = \ell$ \cite[Lemma 12]{geil2015asymptotically}. By Theorem \ref{teomu}, we have that $M_{\mu}(C_1,C_2) \geq n - \mu_1 + Z(H(Q),\mu,\mu)$, where 
$$Z(H(Q),\mu,\mu) = \#\{\alpha \in \cup_{v=1}^{\mu-1}(-v+H(Q))\mid \alpha \notin H(Q)\}.$$
For any $x,y \in H(Q)$, we have that $|x - y| \leq q^{\frac{\nu-1}{2}}$, thus if $\mu \ge  q^{\frac{\nu-1}{2}}$ then $(\cup_{v=0}^{\mu-1} - v + H(Q)) = \bbN_0 \cup \{-1,\ldots,-(\mu-1)\}$. It follows that $Z(H(Q),\mu,\mu) = (\bbN_0 \backslash H(Q^\nu)) \cup \{-1,\ldots,-(\mu-1)\}) = g + \mu - 1$. Thus $M_{\mu}(C_1,C_2) \geq n - \mu_1 +  g + \mu - 1$. Moreover since $\mu_2 \ge  2g-1$, then $\mu_2-g+1=\dim C_2$ and the first part of the proposition holds. 

For $\ell \leq q^{\frac{\nu-1}{2}}$ we claim that:
$$Z(H(Q),\mu,\mu)=\mu+g-1 - (q^{\frac{\nu-1}{2}} \sum_{i=1}^{u_1(\mu)-1}(q^{1-\frac{i}{2}}+q^{-\frac{i}{2}})+u_2(\mu)q^{\frac{u_1(\mu)}{2}}),$$
 where $u_1(\mu) = \left\lfloor \frac{\nu+1}{2} - \log_q(\mu) \right\rfloor$ and $u_2(\mu) = q^{\frac{\nu-2u_1 (\mu)+1}{2}} - \mu$. This means that $\mu = q^{\frac{\nu+1}{2} - u_1(\mu)} - u_2(\mu)$.

We prove it by decreasing induction on $\mu$, for $q^{\frac{\nu-1}{2}} \geq \mu \geq 1$.
For the basis step we have $\mu = q^{\frac{\nu-1}{2}}$, thus $u_1(\mu) = 1$ and $u_2(\mu) = 0$. According to our claim, $Z(H(Q),\mu,\mu)$ is equal to $\mu + g - 1$, which has been already proven in the first part of this proposition.
For the inductive step, we now assume that our claim is true for $Z(H(Q),\mu,\mu)$ and we want to prove it for $Z(H(Q),\mu-1,\mu-1)$.
We note that:
\begin{eqnarray*}
Z(H(Q),\mu,\mu) & = & \#\{\alpha \in \cup_{v=1}^{\mu-2}(-v+H(Q))\mid \alpha \notin H(Q)\} + \\ 
& & \#\{\alpha \in -(\mu-1)+H(Q))\mid \alpha \notin \cup_{v=0}^{\mu-2}(-v+H(Q))\} \\
& = & Z(H(Q),\mu-1,\mu-1) + \#T(\mu),
\end{eqnarray*}
where $T(\mu) = \{\alpha \in -(\mu-1)+H(Q)\mid \alpha \notin \cup_{v=0}^{\mu-2}(-v+H(Q))\}$. Thus $Z(H(Q),\mu-1,\mu-1) = Z(H(Q),\mu,\mu) - \#T(\mu)$.
We consider two cases: $\mu$ such that $q^{\frac{\nu-2u_1(\mu-1)-1}{2}} < \mu-1 < q^{\frac{\nu-2u_1(\mu-1)+1}{2}}$ and  $\mu-1 = q^{\frac{\nu-2u_1(\mu-1)+1}{2}}-1$

Let us consider the first case, $\mu$ such that $q^{\frac{\nu-2u_1(\mu-1)-1}{2}} < \mu-1 < q^{\frac{\nu-2u_1(\mu-1)+1}{2}}$, then $u_1(\mu-1) = u_1(\mu)$ and $u_2(\mu-1) = u_2(\mu) + 1$.

By induction we have that $Z(H(Q), \mu, \mu) = \mu + g - 1 - (q^{\frac{\nu-1}{2}}\sum_{i=1}^{u_1(\mu)-1}(q^{1-\frac{i}{2}} - q^{-\frac{i}{2}}) + u_2(\mu)q^{\frac{u_1(\mu)}{2}})$. We claim that $\#T(\mu) = q^{\frac{u_1(\mu)}{2}}$. By (5) in Lemma \ref{rem1}, for any $x,y \in \cup_{i=0}^{u_1(\mu-1)} S^i_\nu$, $x > y$ we have that $x - y \geq q^{\frac{\nu-2u_1(\mu-1)+1}{2}}$, moreover $\mu-1 < q^{\frac{\nu-2u_1(\mu-1)+1}{2}}$. Therefore, one has that $(-(\mu-1) + \cup_{i=0}^{u_1(\mu-1)} S^i_\nu) \cap (\cup_{v=0}^{\mu-2} - v + H(Q)) = \emptyset$. Then $-(\mu-1) + \cup_{i=0}^{u_1(\mu-1)} S^i_\nu \subseteq T(\mu)$. Actually, the previous inclusion is an equality. We shall prove it by contradiction: we assume that there exists an element $x \in T(\mu)$ but not in $-(\mu-1) + \cup_{i=0}^{u_1(\mu-1)} S^i_\nu$. By definition of $T(\mu)$, we have that
$x \in -(\mu-1) + (S^{\infty}_\nu \cup (\cup_{i=u_1(\mu-1)+1}^{j/2} S^i_\nu))$ where $j=2\lfloor \nu/2 \rfloor$. Consider $y < x$ to be the previous element of $x$ in $-(\mu-1) + H(Q)$.

By (3) and (4)  in Lemma \ref{rem1}, we have that 
$$x - y \leq q^{\frac{\nu-2u_1(\mu-1)-1}{2}} < \mu - 1.$$

Thus, $-(\mu-2) \leq -(\mu-1)+(x-y) \leq 0$ and
$$x \in -(\mu-1)+(x-y)+H(Q) \subseteq \cup_{v=0}^{\mu-2}(-v+H(Q)).$$

This means $x \notin T(\mu)$, which is a contradiction.
It follows that $\# T(\mu) = \# \big(-(\mu-1) + \cup_{i=0}^{u_1(\mu-1)} S^i_\nu\big) = \# \big(\cup_{i=0}^{u_1(\mu-1)} S^i_\nu\big) = q^{\frac{u_1(\mu-1)}{2}} = q^{\frac{u_1(\mu)}{2}}$.

We consider now the second case,  $\mu-1 = q^{\frac{\nu-2u_1(\mu-1)+1}{2}}-1$, then $u_1(\mu-1) = u_1(\mu)+1$ and $u_2(\mu-1) = 0$. By using the same argument as in the first case, one may also prove that $\#T(\mu) = q^{\frac{u_1(\mu)}{2}}$.

\end{proof}

\begin{cor} \label{coromu}
By using the same notation of the previous proposition,  for $2g \leq \mu_2 < \mu_1 < n-1$ and $\ell \geq q^{\frac{\nu-1}{2}}$ we have that $M_{\ell}(C_1,C_2) = n - \dim C_2$.
\end{cor}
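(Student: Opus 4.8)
The plan is to sandwich $M_{\ell}(C_1,C_2)$ between matching lower and upper bounds. The lower bound comes for free from the preceding proposition: the hypotheses $2g \leq \mu_2 < \mu_1 < n-1$ lie inside the range $2g-1 \leq \mu_2 < \mu_1 \leq n-1$ of Proposition \ref{propmu}, so $\mu = \mu_1-\mu_2 = \ell$, and since by assumption $\ell = \mu \geq q^{\frac{\nu-1}{2}}$, the first displayed inequality of Proposition \ref{propmu} gives $M_{\ell}(C_1,C_2) \geq n - \dim C_2$. It then remains to prove the reverse inequality $M_{\ell}(C_1,C_2) \leq n - \dim C_2$, which is a Singleton-type upper bound valid for any nested pair and makes no use of the tower structure.

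For the upper bound I would exhibit an explicit witness subspace realizing the minimum in the definition of $M_{\ell}$. Write $k_j = \dim C_j$, so that $\ell = k_1 - k_2$, and fix an information set $I \subseteq \{1,\ldots,n\}$ for $C_2$, i.e. a set of $|I| = k_2$ coordinates on which the restriction map $C_2 \to \mathbb{F}_q^{k_2}$ is a bijection. Let $D$ be the shortening of $C_1$ at $I$, namely $D = \{c \in C_1 \mid c_i = 0 \text{ for all } i \in I\}$, the kernel of the projection $\pi_I \colon C_1 \to \mathbb{F}_q^{|I|}$. By construction $\mathrm{Supp}\,D \subseteq \{1,\ldots,n\}\setminus I$, so $\#\mathrm{Supp}\,D \leq n - k_2$.

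The key step is the dimension and intersection bookkeeping for $D$. On one hand $\dim D = k_1 - \mathrm{rank}(\pi_I) \geq k_1 - |I| = \ell$. On the other hand, because $I$ is an information set for $C_2$ the restriction $\pi_I|_{C_2}$ is injective, so any element of $C_2$ vanishing on $I$ is $\vec{0}$; hence $D \cap C_2 = \{\vec{0}\}$. Consequently $\dim(D + C_2) = \dim D + k_2 \leq \dim C_1 = k_1$, which forces $\dim D \leq \ell$ and therefore $\dim D = \ell$. Thus $D$ is an admissible competitor in the definition of $M_{\ell}(C_1,C_2)$, having dimension exactly $\ell$ and trivial intersection with $C_2$, with $\#\mathrm{Supp}\,D \leq n - k_2$; this yields $M_{\ell}(C_1,C_2) \leq n - \dim C_2$. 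Combining with the lower bound proves the claimed equality.

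I expect no genuine obstacle here: the substance of the corollary is the tower-specific lower bound already established in Proposition \ref{propmu}, while the matching upper bound is just the general Singleton-type inequality $M_m(C_1,C_2) \leq n - k_1 + m$ specialized to $m = \ell = k_1 - k_2$. The only point requiring care is the role of the information set $I$, which must simultaneously guarantee $D \cap C_2 = \{\vec{0}\}$ (via injectivity of $\pi_I$ on $C_2$) and pin down $\dim D = \ell$ (via the complementary dimension count); both follow from routine linear algebra.
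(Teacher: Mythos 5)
Your proof is correct and follows essentially the same route as the paper: the lower bound is quoted from Proposition \ref{propmu} and the matching upper bound is the Singleton-type inequality $M_m(C_1,C_2) \leq n - k_1 + m$ at $m = \ell$. The only difference is that the paper simply invokes the Singleton bound, whereas you supply a self-contained derivation of it by shortening $C_1$ on an information set of $C_2$; that derivation is valid.
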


\begin{proof}
By \ref{propmu}, $M_{\ell}(C_1,C_2) \geq n - \dim C_2$. And $M_{\ell}(C_1,C_2) \leq n - \dim C_2$, by the Singleton bound for one-point algebraic geometric codes and the result holds.
\end{proof}

This means that for $\ell \geq q^{\frac{\nu-1}{2}}$ the Singleton bound is reached. Note that for $\ell < q^{\frac{\nu-1}{2}}$, the bound in Proposition \ref{propmu} allows us to get a refined bound since we could consider $h_{c-m}$.

As before, this result has an asymptotically implication:

\begin{cor} \label{cormu2}
Let $q$ be an even power of a prime, $\frac{2}{\sqrt{q}-1} \leq \tilde{R}_2 \leq \tilde{R}_1 < 1$, and $\tilde{R} = \tilde{R}_1 - \tilde{R}_2$. There exists a sequence of one-point algebraic geometric codes $C_{2,i} = C_{\mathcal{L}}(D_i,\mu_{2,i}Q) \subsetneq C_{1,i} = C_{\mathcal{L}}(D_i,\mu_{1,i}Q)$, $\mu_i = \mu_{1,i}-\mu_{2,i}$, such that:
$n_i = n(C_{2,i}) = n(C_{1,i}) \rightarrow \infty$, $\mu_{j,i} / n_i
\rightarrow \tilde{R}_j$ when $i \rightarrow \infty$, for $j = 1,2$.
Let $\ell_i = \dim C_{1,i} - \dim C_{2,i}$, $M = \liminf
M_{\ell_i}(C_{1,i},C_{2,i}) / n_i$,  $R_j = \lim \frac{\dim C_{i,j}}{n_i}$ for $j = 1,2$, and $R = R_1 - R_2$, we have that:
\begin{equation*}M = 1 - R_2 \quad \mbox{ if } \quad R \geq \frac{1}{q - \sqrt{q}}\end{equation*}
and
\begin{equation*}M \geq 1 - R_2  - \Bigg( \frac{1}{q - \sqrt{q}} \Bigg( \sum_{i=1}^{- \lfloor\log_q(R(1-\frac{1}{\sqrt{q}}))\rfloor - 1} (q^{1-\frac{i}{2}} - q^{-\frac{i}{2}}) +  \end{equation*}
\begin{equation*} + q^{1+\frac{1}{2}  \lfloor\log_q(R(1-\frac{1}{\sqrt{q}}))\rfloor} \Bigg) + Rq^{-\frac{1}{2} \lfloor\log_q(R(1-\frac{1}{\sqrt{q}}))\rfloor}\Bigg) \quad \mbox{ if } \quad R <\frac{1}{q - \sqrt{q}}.\end{equation*}
\end{cor}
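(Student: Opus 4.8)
The plan is to obtain both asymptotic bounds directly from the two regimes of Proposition~\ref{propmu} by normalising by $n_i$ and passing to the limit. First I would fix the construction exactly as in the proof of Corollary~\ref{cormu}: take the Garcia--Stichtenoth tower of Section~\ref{secintroduction}, choose even indices $\nu_i\to\infty$, let $Q$ be the distinguished pole of $x_1$, and let $D_i$ be a sum of $n_i-1$ rational places distinct from $Q$, with $n_i\to\infty$ and $n_i$ as large as the place count permits, so that $n_i\sim q^{\frac{\nu_i-1}{2}}(q-\sqrt q)$. Choosing $\mu_{2,i}<\mu_{1,i}\le n_i-1$ with $\mu_{j,i}/n_i\to\tilde R_j$ and $\mu_{2,i}\ge 2g_{\nu_i}-1$ (possible because $\tilde R_2\ge 2/(\sqrt q-1)=\lim 2g_{\nu_i}/n_i$) places us under the hypotheses of Proposition~\ref{propmu}, so that $\mu_i=\ell_i$ and the two stated lower bounds on $M_{\ell_i}(C_{1,i},C_{2,i})$ apply.

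The second step is to record the normalisations that drive every limit. From the place count, $q^{\frac{\nu_i-1}{2}}/n_i\to 1/(q-\sqrt q)$ and $q^{\frac{\nu_i+1}{2}}/n_i\to q/(q-\sqrt q)$; from the Drinfeld--Vladut property of the tower, $g_{\nu_i}/n_i\to 1/(\sqrt q-1)$. Using $\dim C_{j,i}=\mu_{j,i}-g_{\nu_i}+1$ one then gets $R_j=\tilde R_j-1/(\sqrt q-1)$, hence $R=R_1-R_2=\tilde R$ and $\mu_i/n_i\to R$. In particular the dichotomy $\mu_i\gtrless q^{\frac{\nu_i-1}{2}}$ of Proposition~\ref{propmu} becomes, in the limit, $R\gtrless 1/(q-\sqrt q)$, which is precisely the case split of the corollary.

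For the regime $R\ge 1/(q-\sqrt q)$ (so $\mu_i\ge q^{\frac{\nu_i-1}{2}}$ eventually), the first bound of Proposition~\ref{propmu} gives $M_{\ell_i}\ge n_i-\dim C_{2,i}$, whence $M\ge 1-R_2$. For the matching upper bound I would invoke, as in Corollary~\ref{coromu}, the Singleton bound $M_{\ell_i}(C_{1,i},C_{2,i})\le n_i-\dim C_{1,i}+\ell_i=n_i-\dim C_{2,i}$, which forces $M\le 1-R_2$; combining the two yields the equality $M=1-R_2$.

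For the regime $R<1/(q-\sqrt q)$ I would divide the second bound of Proposition~\ref{propmu} by $n_i$ and take the liminf term by term, using the normalisations above together with $\mu_i/n_i\to R$. The delicate point, and the main obstacle, is the exponent $u_1(\mu_i)=\lfloor\frac{\nu_i+1}{2}-\log_q\mu_i\rfloor$: I would show $\frac{\nu_i+1}{2}-\log_q\mu_i\to \tfrac12-\log_q(\sqrt q-1)-\log_q R=-\log_q\big(R(1-\tfrac1{\sqrt q})\big)$, so that, away from the boundary where $R(1-\tfrac1{\sqrt q})$ is an integral power of $q$, the floor stabilises and $u_1(\mu_i)$ is eventually constant, equal to the integer recorded in the statement through $-\lfloor\log_q(R(1-\tfrac1{\sqrt q}))\rfloor$. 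Granting this stabilisation, the summation index, the factor $q^{u_1/2}$ and the term $q^{\frac{\nu_i+1}{2}-u_1}-\mu_i$ all converge, and collecting the pieces reproduces the displayed second-case expression. The genuine care is precisely in tracking this floor through the limit and in checking the boundary $R=1/(q-\sqrt q)$ between the two regimes, since every constant in the final formula is pinned down by the limiting value of $u_1$.
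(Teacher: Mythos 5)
Your proposal is correct and follows essentially the same route as the paper: construct the code sequence from the Garcia--Stichtenoth tower as in Corollary~\ref{cormu}, verify $R_j=\tilde R_j-\frac{1}{\sqrt q-1}$ and $R=\tilde R$, and pass to the limit in Proposition~\ref{propmu} (together with the Singleton bound, as in Corollary~\ref{coromu}, for the equality in the first regime). The only difference is that you make explicit the stabilisation of the floor $u_1(\mu_i)$ at $-\lfloor\log_q(R(1-\frac{1}{\sqrt q}))\rfloor$, a point the paper's very terse proof leaves implicit and only touches on in the remark following the corollary.
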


\begin{proof}
Let $(\mathcal{F}_1,\mathcal{F}_2,\ldots)$ be the tower of function fields defined in section \ref{secintroduction}, and $0 \leq \mu_{2,i} < \mu_{1,i} \leq n_i - 1$ with  $\mu_{j,i} / n_i \rightarrow \tilde{R}_j$ for $j = 1,2$, where $n_i$ is the length of rational places of $\mathcal{F}_i$.\\
Now consider $C_{j,i} = C_{\mathcal{L}}(D_i,\mu_{j,i}Q)$ for $j = 1,2$, where $D_i$ is a divisor of degree $n_i - 1$, with $n_i - 1$ distinct places not containing $Q$, which is the unique pole of $x_1 \in \mathcal{F}_i$.
Since we assume that $\frac{2}{\sqrt{q}-1} \leq \tilde{R}_2 \leq \tilde{R}_1 < 1$ then $R_j = \tilde{R}_j - \frac{1}{\sqrt{q}-1}$, for $j=1,2$ and $R = \tilde{R}$. By taking the limit of the result obtained in Proposition \ref{propmu} and Corollary \ref{coromu} the result holds.\\
\end{proof}

Note that if $\lfloor\log_q(R(1-\frac{1}{\sqrt{q}}))\rfloor = \log_q(R(1-\frac{1}{\sqrt{q}}))$ then the formulas in Corollary \ref{cormu2} become:
$$M = 1 - R_1 \quad \mbox{if} \quad R \geq \frac{1}{q - \sqrt{q}}$$
and
$$M \geq 1 - R_1 - \frac{1}{q - \sqrt{q}}  \sum_{i=1}^{- \log_q(R(1-\frac{1}{\sqrt{q}})) - 1} \left(q^{1-\frac{i}{2}} - q^{-\frac{i}{2}} \right) \mbox{ if } R <\frac{1}{q - \sqrt{q}}.
$$

Corollary \ref{coroAG} can be used for $\rho \le \min \{ R,\frac{1}{\sqrt{q}-1}\}$. If $C_{2,i} = \{ 0\}$ for all $i$, then the value $M$ of Corollary \ref{cormu2} represents the asymptotic value of the highest GHW of $C_{i,1}$. Note that Corollary \ref{cormu2} can be used for  any value of $R$, but \ref{coroAG} cannot.
 
\bibliography{biblio}
\bibliographystyle{plain}

\end{document}